\newlength{\ddbitwidth}
\newcommand{\Z}{%
  \makebox[\ddbitwidth]{\texttt{0}}%
}
\newcommand{\U}{%
  \makebox[\ddbitwidth]{\texttt{1}}%
}
\newcommand{\D}{%
  \makebox[\ddbitwidth]{\texttt{\_}}%
}
\newcommand{\B}{%
  \makebox[\ddbitwidth]{}%
}
\definecolor{ddgrey}{rgb}{0.6,0.6,0.6}
\newcommand{\ddgreymid}{%
  \makebox[\ddbitwidth]{%
    \raisebox{0pt}[\height][0pt]{%
      \textcolor{ddgrey}{\ensuremath{\mid}}%
}}}
\newcommand{\ddstrtimes}[2]{%
  #1$\times$\texttt{#2}%
}
\newlength{\ddlenRes}
\newcommand{\ddResTtlSq}[1]{\resizebox{\ddlenRes}{\height}{#1}}
\def\ddTT#1=#2Z{%
  \ensuremath{#1}\mbox{}\ensuremath{=}\mbox{}\ensuremath{#2}%
}
\newcommand{\ddT}[1]{%
  \ddResTtlSq{\ddTT#1Z}%
}
\newcommand{\ddTR}[4]{%
  \hline#1&\ddResTtlSq{#2}&\ddResTtlSq{#3}&\ddResTtlSq{#4}&\ddTRX%
}
\newcommand{\ddTRX}[5]{%
  \ddT{#1}&\ddT{#2}&\ddT{#3}&\ddT{#4}&\ddT{#5}&\ddTRXX%
}
\newcommand{\ddTRXX}[7]{%
  \ddT{#1}&\ddT{#2}&\ddT{#3}&\ddT{#4}&\ddT{#5}&\ddT{#6}&\ddT{#7}\\\hline\hline%
}
\newcommand{\ddRR}[4]{%
  \texttt{#1}&#2&#3&#4&\ddRRX%
}
\newcommand{\ddRRX}[5]{%
  #1&#2&#3&#4&#5&\ddRRXX%
}
\newcommand{\ddRRXX}[7]{%
  #1&#2&#3&#4&#5&#6&#7\\\hline
}
\newlength{\ddwidthtt}
\newcommand{\aZ}{%
  \makebox[2\ddwidthtt]{\texttt{0}}%
}
\newcommand{\aU}{%
  \makebox[2\ddwidthtt]{\texttt{1}}%
}
\newcommand{\oA}{%
  \makebox[2\ddwidthtt]{\underline{\texttt{\ }}}%
}
\newsavebox{\ddboxI}
\newsavebox{\DDboxII}
\begin{document}

\newlength{\halftextwidth}
\setlength{\halftextwidth}{0.47\textwidth}
\def\halffigsize{2.2in}
\def\thirdfigsize{1.5in}
\def\negvspace{0in}
\def\posvspace{0em}

\newtheorem{mydefinition}{Definition}
\newtheorem{myexample}{Example}
\newtheorem{myobservation}{Observation}
\newtheorem{mytheorem}{Theorem}
\newtheorem{mylemma}{Lemma}
\newtheorem{mycor}{Corollary}
\newenvironment{runexample}{{\bf Running example:} \it}{\rm}
\newcommand{\myprroof}{\noindent {\bf Proof:\ \ }}
\newcommand{\myqed}{\mbox{$\Box$}}

\newcommand{\constraint}[1]{\mbox{\sc #1}\xspace}
\newcommand{\element}{\constraint{Element}}

\newcommand{\mymax}{\mbox{\rm max}}
\newcommand{\mymin}{\mbox{\rm min}}
\newcommand{\mylog}{\mbox{\rm log}}

\newcommand{\todo}[1]{{\tt (... #1 ...)}}

\title{Finding Synchronization Codes to Boost Compression by
  Substring Enumeration}
\author{
Dany Vohl,
Claude-Guy Quimper,
\and
Danny Dub\'{e}}
\institute{Universit\'{e} Laval, Canada\\
\texttt{Dany.Vohl.1@ulaval.ca}\\
\texttt{Claude-Guy.Quimper@ift.ulaval.ca}\\
\texttt{Danny.Dube@ift.ulaval.ca}}

\maketitle
\begin{abstract}
  Synchronization codes are frequently used in numerical data
  transmission and storage.  Compression by Substring Enumeration
  (CSE) is a new lossless compression scheme that has turned into a
  new and unusual application for synchronization codes.  CSE is an
  inherently bit-oriented technique.  However, since the usual
  benchmark files are all byte-oriented, CSE incurred a penalty due to
  a problem called \emph{phase unawareness}.  Subsequent work showed
  that inserting a synchronization code inside the data
  \emph{before} compressing it improves the compression performance.
  In this paper, we present two constraint models that compute the
  shortest synchronization codes, i.e. those that add the fewest
  synchronization bits to the original data.  We find synchronization codes for blocks of
  up to 64 bits.
\end{abstract}

\section{Introduction}

Synchronization codes are frequently used in numerical data
transmission and
storage~\cite{bruyere97,do02,golomb65,scholtz66,wijngaarden01,stiffler71}.
While they might not be as well known as error-correction and
error-detection codes, they still play a crucial role.  Indeed,
whenever the reception of data becomes ill-synchronized, data gets
distorted.  In such a case, even error-correction codes are of
no help since they might interpret some data symbols as control
symbols and vice-versa.  In almost every application, keeping the
transmission and storage of data correctly synchronized is considered
to be a separate, lower-level task, which is
handled by
synchronization codes.

Recent work on data compression gives synchronization codes a new
and rather unusual purpose~\cite{dube10b,dube11}.  These are
used to \emph{boost} the performance of a specific data compressor.  A
preprocessing step is added to the data compressor which
consists in inserting synchronization codes to the data.  Since
inserting a synchronization code causes the data to expand, it seems
at first glance that this preprocessing is counter-productive.
However, in this specific setup, the insertion of the synchronization
codes improves the performance of the subsequent compression step, as
measured (in an absolute sense) on the compressed data.\footnote{That
  is, the size of the synchronized-and-compressed data is smaller than
  the size of the directly compressed data, in an absolute sense.  We
  insist on the term ``absolute'' here since making synchronized data
  more compressible, in a \emph{relative} sense, is trivial because
  one only needs to stuff the data with highly repetitive paddings.}
The data compression technique is called \emph{Compression by
  Substring Enumeration} (CSE)~\cite{dube10a}.

In order to be successful as compression boosters, the considered
synchronization codes must obey some properties.  In particular, it is
desirable for the codes not to cause too big an expansion of the data.
Indeed, the subsequent compression phase has to compensate for all of
this expansion.  Also, the important characteristics of the
synchronization codes is their synchronization power; i.e.~how
effectively they provide synchronization information.  In the light of
these goals, this paper aims at designing strong synchronization codes
that expand data as little as possible.

The paper is structured as follows.  Section~\ref{sect:syncodes}
briefly surveys the synchronization codes and formally defines the
family of codes that we focus on.  Section~\ref{sect:CSE} presents the
key features of CSE and gives some intuitive reasons why CSE has the
potential to be boosted by synchronization codes.  Next, we present
two constraint models that we use to compute a synchronization code.
Section~\ref{sect:constraint_model} presents a model that is based on
constraint programming.  Section~\ref{sect:pseudo_boolean_model}
presents a pseudo-Boolean model.  Section~\ref{sect:symmetries}
discusses about ways to break symmetries among the sets of codes that
are considered in order to reduce the time spent by the solvers in
searching.  The experiments described in
Section~\ref{sect:experiments} produced synchronization codes for
words of~2~to~8, 16, 32, and 64~bits.  Our codes are proved optimal for words of
2 to 8, 16 and 32~bits.


\section{Synchronization Codes}
\label{sect:syncodes}

\subsection{Overview}

There exists a wide variety of synchronization codes.  Codes differ
according to the properties they feature and the principles they are
based on.  Naturally, their effectiveness depends on the application
at hand.  We start by illustrating the need for synchronization codes
by considering two examples of applications.

Our first application is a serial communication link.  In serial
communication, symbols (typically bits) are transmitted one after the
other through a channel from an emitter to a receiver.  When the pace
of transmission is controlled by a clock, it might be the case that,
in fact, the emitter and the receiver each have their own separate
clock.  In such a context, it is likely that the clocks do not have
exactly the same frequencies.  If the receiver's clock is slower than
the emitter's clock, then bits might go undetected from time to time
by the receiver.  On the other hand, if the receiver's clock is
faster, then a single bit sent by the emitter might get sampled twice
by the receiver.  Such communication errors are called synchronization
errors.  Naturally, some form of synchronization mechanism has to be
provided in order to prevent these errors.  In some contexts, the
designer of the communication device might have the luxury to join a
control channel to the data channel (e.g.,\ on the motherboard of a
computer).  The control channel transmits synchronization symbols
while the data channel transmits the payload symbols.  In other
contexts, no separate control channel is available and the
synchronization information must be inserted within the payload data.

Our second application is a hard disk.  In a hard disk, the read/write
(RW) head hovers above one of the tracks of the spinning surface.  The
RW process of the hard disk is subject to the same synchronization
errors as the serial communication link.  For instance, serial-like
synchronization errors are possible due to the imprecision in the
rotation speed of the disk or some other factor.  But the RW process
also faces additional challenges.  Indeed, the RW head has to
determine where a track starts; in particular, after it moves from
one track to another.  Some sort of marker has to tell the RW head where the
track starts (or, at a finer level, where the individual sectors of
the track start).  Nowadays, the magnetic bits are so densely packed
on the tracks that it is unrealistic to rely on a physical device to
mark the beginning of the tracks.  Instead, markers identifying the
start of the tracks (or sectors) have to be integrated among the
magnetic information.  Since waiting for a unique ``track beginning''
marker would cause a waste of time, we are likely to require the
inscription of more frequent ``sector beginning'' markers.  These
enable quicker recovery of the synchronization and also offer the
opportunity to join a header that includes
additional useful information such as the sector IDs.

\subsection{Characteristics of Synchronization Codes}

Below is a list of considerations
(either requirements or features) about
synchronization codes.  But before we give the list, we introduce a
few definitions and a (not completely general) notation for
synchronization codes that we use throughout the paper.

A sequence $A = a_0a_1 \ldots a_{|A|-1}$, is an ordered list of
characters taken from an alphabet $\Sigma$.  The length of a sequence
$A$ is the number of characters in the sequence and is denoted $|A|$.
The subsequence $A[i..j]$ is formed by the characters $a_i a_{i+1}
\ldots a_{j-1} a_j$.  The \emph{rotation by one}\footnote{A complete
  name for the operation ought to be the \emph{rotation to the left by
    one}.  However, this name would be unnecessarily long as we never
  consider rotations to the right.} of a sequence is obtained by
deleting the first character of the sequence and appending it to the end
of the sequence.  For instance, the rotation by one of the sequence
$\mathtt{abcd}$ is $\mathtt{bcda}$.  We denote by $A^1$ the rotation
by one of a sequence $A$.  A \emph{rotation by $i$} of a sequence $A$,
which we denote by~$A^i$, is obtained by applying $i$~times a rotation by
one.

We use the following notation to specify a synchronization code (at
least, to specify the family of synchronization codes we focus on).  A
synchronization code is specified as a sequence taken from the
alphabet $\Sigma \cup \{ \D \}$.  The special character~\D\ acts as a
wildcard.  We say that two symbols \emph{match} each other if they are
identical or if at least one of them is a wildcard.  Sequence~$A$
matches sequence~$B$ if $|A| = |B|$ and if the character $a_i$
matches the character $b_i$ for all $0 \leq i < |A|$.  For instance,
the sequence $\mathtt{a\ \_\ b\ \_\ \_\ c}$ matches the sequence
$\mathtt{a\ b\ \_\ c\ \_\ c}$.  Let~$C$ be a synchronization code
and~$d$ be the number of wildcards in~$C$.  We say that a
\emph{clear-text} sequence~$D$ of $d$~symbols can be encoded using a
synchronization code~$C$ to obtain a \emph{synchronized}
sequence~$D_C$.  One obtains $D_C$ by substituting the first wildcard
in~$C$ by~$d_0$, the second wildcard in $C$ by $d_1$, and so on.
When~$|D|$ is a multiple of~$d$, the successive $d$-symbol blocks
of~$D$ are encoded using the same process.  For instance, by encoding the
original sequence $D=\Z\U\U\U$ using the synchronization code
$C=\D\D\U\U\Z$, we obtain the encoded sequence
$D_C=\underline{\Z\U}\U\U\Z\underline{\U\U}\U\U\Z$ (for the sake of
clarity, the data symbols are underlined).

Here is a list of considerations about synchronization codes.

\begin{itemize}

\item \textbf{Existence of separate channels for control and data.}
  In certain transmission configurations, there are two channels
  available, one for data and one for control, while in other
  configurations, there is only one channel in which both
  data and control symbols are transmitted.  When there is a single
  channel and the distinction is possible, we refer to the
  symbols dedicated to control as \emph{control symbols} and to the
  symbols dedicated to the transmission of pure data as \emph{data
  symbols} or \emph{payload symbols}.

\item \textbf{Transmission through time and/or space.}  Transmission
  through space refers to transmission in the usual sense, i.e.~from
  one point to the other.  Transmission through time refers to
  \emph{storage}.  That is, writing symbols on a storage device can be
  seen as the emission of the symbols and later reading the symbols
  from the device, as the reception of the symbols.  For example, in
  the case of a hard disk, the emitter and the receiver are the same
  RW head, but at two different points in time.

\item \textbf{Presence of a feedback link.}
  The synchronization mechanism may have access to a feedback link.
  For example, the TCP/IP protocol uses feedback
  to control the integrity of the transmission process~\cite{tcp89}.
  On the other hand, a hard disk RW head, when reading a track, might
  read something that it has written long ago.
  In a sense, when it reads, the RW completes a transmission through
  time.  If it discovers a synchronization error, the RW head has
  no means to ask itself to reemit the symbols.

\item \textbf{Size of the alphabet.}
  The considered alphabet is either binary or larger.

\item \textbf{Strength of the synchronization.}  A synchronization
  code provides either hard or soft synchronization.  By \emph{hard}
  synchronization, we mean that, whenever an \emph{a priori}
  ill-synchronized receiver obtains a sufficient number of error-free
  symbols, it is guaranteed to be able to recover synchronization.  We
  also say \emph{reliable} synchronization.  \emph{Soft}
  synchronization means that, in the worst case, it might remain
  impossible to identify the position inside of the data with
  certainty, even after receiving an arbitrary number of error-free
  symbols.

\item \textbf{Blockiness of the synchronization.}
  A synchronization code works either in a blockwise
  or in a continuous fashion.

\item \textbf{Size of the blocks.}
  If a synchronization code works in a blockwise fashion,
  it handles either fixed-size or variable-size blocks.

\item \textbf{Variability of the control-symbol pattern.}
  A synchronization code uses either a fixed or a variable
  pattern of control symbols.  For example, \D\D\Z\U\U\ inserts three
  constant bits.  A variable pattern would use three functions
  to determine the value of the three control bits, based on the data
  bits.  (Note that a variable pattern is not
  representable using our notation.)  Strictly speaking, a fixed
  pattern is a special case of a variable pattern.

\item \textbf{Position of the control bits.}
  If it works in a blockwise fashion, a synchronization code either
  groups the control symbols in a header or intersperses them
  among the data symbols.

\item \textbf{Invariance of the payload symbols.}
  If it works in a blockwise fashion, a synchronization code either
  alters or preserves the payload symbols.  A typical
  case where a synchronization code alters the payload data is when
  there is a header that contains a unique signature that marks the
  beginning of the block.  The synchronization mechanism modifies the
  payload data to ensure that the signature does not appear by
  accident inside of the payload part (see,
  e.g.,~\cite{wijngaarden01}).

\end{itemize}

Let us give a succinct description of the synchronization codes that we
consider in this paper.  Our codes are designed for a
single control-and-data channel, with no access to a feedback link,
for the binary alphabet $\{ \Z, \U\}$. They provide hard
synchronization in a blockwise fashion where the blocks have a fixed
size before and after the encoding process, they insert a fixed
pattern of interspersed control bits, and they keep the original
payload bits unchanged.

\subsection{The Considered Synchronization Codes}

Let us first define the notion of \emph{phase}.  Given that we
consider transmission in a blockwise fashion, with fixed-size blocks,
each bit that is transmitted has a definite position relative to the
beginning of the block in which it appears.  This is the \emph{phase}
of the bit.  If one reads a stream of blocks starting at some unknown
position---not necessarily at the beginning of a block---one might be
interested in identifying the phase of the bits one reads.  Indeed,
identifying the phase allows one to recover synchronization.
Synchronization codes are intended to provide the means to identify
the phase.  We number the phases starting at zero.  The maximum phase
is the block size minus one.  More generally, the \emph{phase} of a
subsequence is the phase of its first bit.  If~$A$ is a sequence made
of one or more blocks and the block size is~$s$, then rotation~$A^i$
has phase~$i \bmod s$, for any~$i$.

A \emph{$(d, k, n)$-synchronization code}~$C$ is a blockwise code that
transforms a clear-text block of $d$~bits into a synchronized block of
$(d+k)$~bits.  $C$~is denoted by a sequence of symbols taken from the
alphabet $\{\Z, \U, \D\}$ where~\Z\ and~\U\ denote control bits and
\D\ denotes a data bit.  A $(d, k, n)$-synchronization code is
characterized by its three parameters:
\begin{itemize}
  \item $d$ is the number of data bits;
  \item $k$ is the number of control bits; and
  \item $n$ is the \emph{reliability}, i.e. the number of bits that
    need to be read, in the worst case, to identify the phase of the
    sequence.
\end{itemize}

\begin{definition}[Synchronization code]
\label{def::code}
$C$ is a \emph{$(d, k, n)$-synchronization code} iff it is a sequence
drawn from the alphabet $\{\Z, \U, \D\}$, it has length~$d+k$, it
contains exactly $d$~wildcards, and the subsequence $C^i[0..n-1]$ matches
the subsequence $C^j[0..n-1]$ only if $i = j \pmod{d+k}$.
\end{definition}

\begin{table}[t]
\caption{An $(8,10,9)$-synchronization code published
  by~\cite{dube11}.\label{table::example}}
\begin{center}
\begin{tabular}{%
       l@{\ \ }l@{\ \ }l@{\ \ }l@{\ \ }l@{\ \ }l@{\ \ }l@{\ \ }l@{\ \ }l@{\ }|%
  @{\ }l@{\ \ }l@{\ \ }l@{\ \ }l@{\ \ }l@{\ \ }l@{\ \ }l@{\ \ }l@{\ \ }l%
}
\D&\D&\D&\D&\Z&\Z&\Z&\U&\U&\D&\D&\D&\D&\Z&\U&\Z&\U&\U\\
\D&\D&\D&\Z&\Z&\Z&\U&\U&\D&\D&\D&\D&\Z&\U&\Z&\U&\U&\D\\
\D&\D&\Z&\Z&\Z&\U&\U&\D&\D&\D&\D&\Z&\U&\Z&\U&\U&\D&\D\\
\D&\Z&\Z&\Z&\U&\U&\D&\D&\D&\D&\Z&\U&\Z&\U&\U&\D&\D&\D\\
\Z&\Z&\Z&\U&\U&\D&\D&\D&\D&\Z&\U&\Z&\U&\U&\D&\D&\D&\D\\
\Z&\Z&\U&\U&\D&\D&\D&\D&\Z&\U&\Z&\U&\U&\D&\D&\D&\D&\Z\\
\Z&\U&\U&\D&\D&\D&\D&\Z&\U&\Z&\U&\U&\D&\D&\D&\D&\Z&\Z\\
\U&\U&\D&\D&\D&\D&\Z&\U&\Z&\U&\U&\D&\D&\D&\D&\Z&\Z&\Z\\
\U&\D&\D&\D&\D&\Z&\U&\Z&\U&\U&\D&\D&\D&\D&\Z&\Z&\Z&\U\\
\D&\D&\D&\D&\Z&\U&\Z&\U&\U&\D&\D&\D&\D&\Z&\Z&\Z&\U&\U\\
\D&\D&\D&\Z&\U&\Z&\U&\U&\D&\D&\D&\D&\Z&\Z&\Z&\U&\U&\D\\
\D&\D&\Z&\U&\Z&\U&\U&\D&\D&\D&\D&\Z&\Z&\Z&\U&\U&\D&\D\\
\D&\Z&\U&\Z&\U&\U&\D&\D&\D&\D&\Z&\Z&\Z&\U&\U&\D&\D&\D\\
\Z&\U&\Z&\U&\U&\D&\D&\D&\D&\Z&\Z&\Z&\U&\U&\D&\D&\D&\D\\
\U&\Z&\U&\U&\D&\D&\D&\D&\Z&\Z&\Z&\U&\U&\D&\D&\D&\D&\Z\\
\Z&\U&\U&\D&\D&\D&\D&\Z&\Z&\Z&\U&\U&\D&\D&\D&\D&\Z&\U\\
\U&\U&\D&\D&\D&\D&\Z&\Z&\Z&\U&\U&\D&\D&\D&\D&\Z&\U&\Z\\
\U&\D&\D&\D&\D&\Z&\Z&\Z&\U&\U&\D&\D&\D&\D&\Z&\U&\Z&\U
\end{tabular}
\end{center}
\end{table}

Table~\ref{table::example} presents an example of an
$(8,10,9)$-synchronization code and illustrates why it is
$9$-reliable.  The code appears on the first row.  The other rows are
mere rotations of the first row.  The reader may verify that, whenever
one selects two \emph{distinct} rows of the matrix, one is guaranteed
to find a mismatch between two control bits inside of the first
$9$~columns.  Note that one cannot rely on the data bits to cause bit
mismatches because data bits are completely out of the control of the
synchronization code.

We prove a property of $(d,k,n)$-synchronization codes that we will
reuse later.
\begin{lemma}
\label{lemma::bound}
In a $(d,k,n)$-synchronization code, the relation $k \leq 2^n-d$ holds.
\end{lemma}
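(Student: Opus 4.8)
The plan is to convert the reliability condition of Definition~\ref{def::code} into a disjointness statement over binary strings of length~$n$, and then finish by a pigeonhole count. Read~$C$ cyclically, and for each phase $i \in \{0, 1, \ldots, d+k-1\}$ consider the length-$n$ window $P_i = C^i[0..n-1]$, i.e.~the length-$n$ prefix of the rotation by~$i$. There are exactly $d+k$ such phases, and by the definition, distinct phases give non-matching windows: whenever $i \not\equiv j \pmod{d+k}$, the window $P_i$ fails to match $P_j$ (in particular no window matches any other, so they are pairwise non-matching).

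First I would associate to each window $P_i$ the set $S_i \subseteq \{0,1\}^n$ of all binary (wildcard-free) strings that match~$P_i$. If $P_i$ contains $w_i$ wildcards, then $|S_i| = 2^{w_i} \geq 1$, because each wildcard may be completed freely to $0$ or $1$ while every non-wildcard position is forced to its own value.

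The key step—and the one I expect to be the real substance of the argument—is to show that non-matching windows have disjoint completion sets: if $P_i$ and $P_j$ do not match, then $S_i \cap S_j = \emptyset$. A binary string $x$ lies in $S_i \cap S_j$ precisely when, at every position, $x$ agrees with every forced symbol that $P_i$ and $P_j$ impose. If $P_i$ and $P_j$ disagree somewhere—one forcing a $0$ at that position and the other a $1$—then no binary $x$ can satisfy both, so the intersection is empty; this is exactly what it means for the two windows to fail to match. Everything else reduces to unwinding the definition of the match relation, so this claim is where the care is needed.

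Finally I would count. The sets $S_0, S_1, \ldots, S_{d+k-1}$ are $d+k$ pairwise-disjoint subsets of $\{0,1\}^n$, each of size at least one, so the number of them cannot exceed the size of the ambient space: $d+k \leq \sum_i |S_i| \leq 2^n$. Rearranging yields $k \leq 2^n - d$, as claimed. The only point I would take care to pin down explicitly is that the windows are indexed by the $d+k$ genuinely distinct phases modulo $d+k$, so that the count gives $d+k$ disjoint sets rather than fewer; once that is settled, the disjointness and the pigeonhole bound are routine.
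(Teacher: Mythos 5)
Your proof is correct and takes essentially the same approach as the paper: the paper also argues by pigeonhole, counting the $d+k$ phases against the $2^n$ possible $n$-bit readings, with each reading determining at most one phase and each phase witnessed by at least one reading. Your completion sets $S_i$ merely make the paper's informal ``associated to at most one phase / covered by at least one sequence'' step precise via explicit disjointness.
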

\begin{proof}
Suppose that one starts reading a stream of bits from an unknown position in that stream. Since
the stream is encoded with a $n$-reliable synchronization code, it is sufficient to read
the next $n$ bits in the stream to find the phase in a sure
way. There are at most $d+k$ phases and these $n$ bits form one of the
$2^n$ possible sequence of $n$ bits. Each of these sequences can be
associated to at most one phase and all phases are covered by at least
one sequence which implies $d+k \leq 2^n$ and thus $k \leq 2^n -
d$. \qed
\end{proof}

\section{Compression by Substring Enumeration}
\label{sect:CSE}

Dub{\'e} and Beaudoin~\cite{dube10b} introduced a lossless data
compression technique called Compression by Substring Enumeration
(CSE).  CSE compresses by transmitting to the decompressor in a
half-explicit way the number~$C_w$ of occurrences of every possible
substring~$w$ of bits; i.e.\ for every $w \in \{ \Z, \U\}^*$.
Table~\ref{table:enumeration} displays the information that needs to
be transmitted, in a compressed form, for the
sequence~\texttt{01010000}.  For example: entry `\ddstrtimes{6}{0}'
indicates that $C_{\mathtt{0}} = 6$; entry `\ddstrtimes{1}{01010000}'
indicates that $C_{\mathtt{01010000}} = 1$; entry
`\ddstrtimes{8}{$\epsilon$}' means that the data is 8-bit long; and
the absence of entry for substring \texttt{11} means that
$C_{\mathtt{11}} = 0$.  CSE considers the data to be circular.
Consequently, we have $C_{\mathtt{00}} = 4$, not $C_{\mathtt{00}} =
3$, where the fourth occurrence is the one that wraps from the end to
the beginning of the data sequence.  We say that CSE describes the
numbers of occurrences in a \emph{half-explicit} way because the
numbers are not transmitted explicitly in a row-by-row fashion.
Rather, relations between the numbers of occurrences of different
substrings are exploited.

\begin{table}[t]
\caption{Number of occurrences of each substring of
  \texttt{01010000}.}
\label{table:enumeration}
\begin{lrbox}{\DDboxII}
\begin{tabular}{|c||llllllll|}
\hline
    \textbf{Length}
  &
    \multicolumn{8}{c|}{\textbf{Substrings}}
\\ \hline \hline
    0
  &
    \ddstrtimes{8}{$\epsilon$}
  &
  &
  &
  &
  &
  &
  &
\\ \hline
    1
  &
    \ddstrtimes{6}{0}
  &
  &
  &
  &
  &
  &
    \ddstrtimes{2}{1}
  &
\\ \hline
    2
  &
    \ddstrtimes{4}{00}
  &
  &
  &
  &
    \ddstrtimes{2}{01}
  &
  &
    \ddstrtimes{2}{10}
  &
\\ \hline
    3
  &
    \ddstrtimes{3}{000}
  &
  &
  &
    \ddstrtimes{1}{001}
  &
    \ddstrtimes{2}{010}
  &
  &
    \ddstrtimes{1}{100}
  &
    \ddstrtimes{1}{101}
\\ \hline
    4
  &
    \ddstrtimes{2}{0000}
  &
  &
    \ddstrtimes{1}{0001}
  &
    \ddstrtimes{1}{0010}
  &
    \ddstrtimes{1}{0100}
  &
    \ddstrtimes{1}{0101}
  &
    \ddstrtimes{1}{1000}
  &
    \ddstrtimes{1}{1010}
\\ \hline
    5
  &
    \ddstrtimes{1}{00000}
  &
    \ddstrtimes{1}{00001}
  &
    \ddstrtimes{1}{00010}
  &
    \ddstrtimes{1}{00101}
  &
    \ddstrtimes{1}{01000}
  &
    \ddstrtimes{1}{01010}
  &
    \ddstrtimes{1}{10000}
  &
    \ddstrtimes{1}{10100}
\\ \hline
    6
  &
    \ddstrtimes{1}{000001}
  &
    \ddstrtimes{1}{000010}
  &
    \ddstrtimes{1}{000101}
  &
    \ddstrtimes{1}{001010}
  &
    \ddstrtimes{1}{010000}
  &
    \ddstrtimes{1}{010100}
  &
    \ddstrtimes{1}{100000}
  &
    \ddstrtimes{1}{101000}
\\ \hline
    7
  &
    \ddstrtimes{1}{0000010}
  &
    \ddstrtimes{1}{0000101}
  &
    \ddstrtimes{1}{0001010}
  &
    \ddstrtimes{1}{0010100}
  &
    \ddstrtimes{1}{0100000}
  &
    \ddstrtimes{1}{0101000}
  &
    \ddstrtimes{1}{1000001}
  &
    \ddstrtimes{1}{1010000}
\\ \hline
    8
  &
    \ddstrtimes{1}{00000101}
  &
    \ddstrtimes{1}{00001010}
  &
    \ddstrtimes{1}{00010100}
  &
    \ddstrtimes{1}{00101000}
  &
    \ddstrtimes{1}{01000001}
  &
    \ddstrtimes{1}{01010000}
  &
    \ddstrtimes{1}{10000010}
  &
    \ddstrtimes{1}{10100000}
\\ \hline
\end{tabular}
\end{lrbox}
\begin{center}
  \resizebox{\textwidth}{!}{%
    \usebox{\DDboxII}%
  }%
\end{center}
\end{table}

As part of the processing of each substring~$w$, there is the crucial
step of predicting the value of the bits that surround all of the
occurrences of~$w$.  More precisely, for each~$w$ that occurs in the
original file, CSE predicts~$C_{\mathtt{0}w\mathtt{0}}$.  This
prediction is not made from scratch: it is based on the already known
values $C_{w}$, $C_{\mathtt{0}w}$, $C_{\mathtt{1}w}$,
$C_{w\mathtt{0}}$, and $C_{w\mathtt{1}}$.  Moreover, it is sufficient
for CSE to explicitly describe $C_{\mathtt{0}w\mathtt{0}}$ as the
three remaining unknown values $C_{\mathtt{0}w\mathtt{1}}$,
$C_{\mathtt{1}w\mathtt{0}}$, and $C_{\mathtt{1}w\mathtt{1}}$ can be
deduced from the known ones.  Even if it is still in its earliest
developments, CSE is already fairly competitive with more mature, well
known techniques such as dictionary-based
compression~\cite{ziv77,ziv78}, prediction by partial
matching~\cite{cleary84,cleary97}, the Burrows-Wheeler
transform~\cite{burrows94}, and compression using
anti-dictionaries~\cite{crochemore02}.  The reader might wonder: what
is the link between CSE and synchronization codes?

CSE
is an inherently bit-oriented technique.  Yet, all the benchmark files
(from the Calgary Corpus~\cite{witten87b}) that were used are
byte-oriented.  The conversion of the files into bit sequences is
trivial: each byte is simply viewed as a string of 8~bits.  However,
this change in the point of view of the data is not just cosmetic.  It
interferes with the prediction steps performed by CSE.  Indeed, let us
consider what happens when CSE predicts the neighbors of
substring~$\mathtt{100}$.  We illustrate the problem using the
following hypothetical excerpt from a benchmark file:
\[
    \ldots
  \ddgreymid
    \Z\B\Z\B\underline{\U\B\Z\B\Z}\B\Z\B\Z\B\U
  \ddgreymid
    \Z\B\U\B\Z\B\U\B\Z\B\U\B\U\B\underline{\U
  \ddgreymid
    \Z\B\Z}\B\U\B\Z\B\U\B\Z\B\U\B\Z
  \ddgreymid
    \ldots
\]
in which the two occurrences of~$\mathtt{100}$ have been underlined.
The light-grey vertical bars delimit the frontiers between the
original bytes but CSE is completely unaware of them.  We say that the
individual bits in some byte are located at phases~0, \ldots, 6, or~7,
depending on their position relative to a byte frontier.  For
instance, the first occurrence of~$\mathtt{100}$ in the excerpt is at
phase~2 while the second one is at phase~7.
Statistically, it is likely that
the~`$\mathtt{100}$' substrings located at a certain phase have
different neighboring bits than the~`$\mathtt{100}$'
substrings located at another phase.  Since CSE is unaware of the
phase, it makes predictions about the neighboring bits of all the
occurrences of~$\mathtt{100}$ at once, without regard to their phases.
It has been empirically observed that this phase unawareness incurs some
penalty for CSE.

Given this weakness of CSE, the authors had two choices: either adapt
CSE to effectively deal with bytes or use a trick to compensate for
phase unawareness.  The second option appeared to be easier and it
directly leads to the family of synchronization codes that we focus on
here~\cite{dube10b,dube11}.  Despite the fact that synchronizing
original data expands it, the codes in this family are exactly the
kind of patterns that CSE is very proficient at detecting and
exploiting.  Consequently, CSE does not suffer too much from
``learning" the newly inserted synchronization code.  Moreover, the
newly acquired artificial phase awareness leads to better compression
rates.  We refer the reader to Section~\ref{sect:experiments}.

\section{A Constraint Model}
\label{sect:constraint_model}

We present a model to compute a $(d,k,n)$-synchronization code.  The
model is built around $2k$ variables that are the positions and values
of the control bits in the synchronization code~$C$. Let $P_i$ be the
position of the $i^{th}$ control bit and $V_i$ be its corresponding
value.  Note that we will frequently have to compute some value modulo
$(d+k)$, especially after adding two values together or subtracting
one value from another.  From now on, we write~$x \oplus y$ as a
shorthand for $(x + y) \bmod (d + k)$ and $x \ominus y$ as a shorthand
for $(x - y) \bmod (d + k)$.

We have the following constraints.
\begin{align}
  P_i & \in \{0, 1, \ldots, d + k - 1\} & \forall \, 0 \leq i < k\\
  V_i & \in \{0, 1\} & \forall \, 0 \leq i < k\\
  \intertext{To break symmetries, we assume that the control bits are
    enumerated in the same order as they appear in the sequence.}
  P_{i-1} & < P_i & \forall \, 0 < i < k
\end{align}

\newcommand{\PA}{\ensuremath{P^A}}
\newcommand{\PB}{\ensuremath{P^B}}
\newcommand{\VA}{\ensuremath{V^A}}
\newcommand{\VB}{\ensuremath{V^B}}

From Definition~\ref{def::code}, we know that, for any $i \neq j
\pmod{d + k}$, there exists at least one control bit in $C^i[1..n]$
that does not match a control bit in $C^j[1..n]$.  Let the mismatching
bit in $C^i[1..n]$ be the $a^{th}$ control bit in the synchronization
code and the corresponding mismatching bit in $C^j[1..n]$ be the
$b^{th}$ control bit in the synchronization code. Since the $a^{th}$
bit and the $b^{th}$ bit are aligned when respectively shifted by $i$
and $j$ bits, we know that $P_a - P_b = i \ominus j$.  Moreover,
since the $a^{th}$ bit occurs in $C^i[0..n-1]$, its position in the
code $C$ is therefore in $\{i, i \oplus 1, \ldots, i \oplus (n-1)\}$.
Finally, these two bits exist for any $i \neq j \pmod{d + k}$. We
therefore create two variables for each $0 \leq i < j < d+k$.
\begin{align*}
  A_{i,j} & \in \{0, 1, \ldots, k - 1\} & \forall \, 0 \leq i < j < d + k \\
  B_{i,j} & \in \{0, 1, \ldots, k - 1\} & \forall \, 0 \leq i < j < d + k \\
  \intertext{Let $\PA_{i,j}$ be the position of the $a^{th}$ bit in the
    synchronization code.}
  \PA_{i,j} & \in \{i, i \oplus 1, \ldots, i \oplus (n - 1) \} &
  \forall \, 0 \leq i < j < d+k
  \intertext{Let $\PB_{i,j}$ be the position of the $b^{th}$ bit in the
    synchronization code.}
  \PB_{i,j} & \in \{j, j \oplus 1, \ldots, j \oplus (n - 1) \} &
  \forall \, 0 \leq i < j < d+k
  \intertext{Similarly, we define $\VA_{i,j}$ and $\VB_{i,j}$ to be the values of these control bits.}
  \VA_{i,j} & \in \{0, 1\} & \forall \, 0 \leq i < j < d+k \\
  \VB_{i,j} & \in \{0, 1\} & \forall \, 0 \leq i < j < d+k \\
  \intertext{These three constraints bind the variables $A_{i,j}$,
    $B_{i,j}$, $\PA_{i,j}$, and $\PB_{i,j}$ together. The constraint $\constraint{Element}([X_0, \ldots, X_{n-1}], Y, Z)$ ensures that $X_Y = Z$.}
  \element&([P_0, \ldots, P_{k-1}], A_{i,j}, \PA_{i,j}) & \forall \, 0 \leq i < j < d + k \\
  \element&([P_0, \ldots, P_{k-1}], B_{i,j}, \PB_{i,j}) & \forall \, 0 \leq i < j < d + k \\
  \PB_{i,j} & = \PA_{i,j} \oplus (j - i) & \forall \, 0 \leq i < j < d+k \\
  \intertext{Finally, since the $a^{th}$ bit does not match the
    $b^{th}$ bit, their characters must be different.}
  \element&([V_0, \ldots, V_{k-1}], A_{i,j}, \VA_{i,j}) & \forall \, 0 \leq i < j < d+k \\
  \element&([V_0, \ldots, V_{k-1}], B_{i,j}, \VB_{i,j}) & \forall \, 0 \leq i < j < d+k \\
  \VA_{i,j} & \neq \VB_{i,j}  & \forall \, 0 \leq i < j < d+k
\end{align*}

This model has a total of $O(k^2+d^2)$ variables and $O(k^2+d^2)$
constraints. The cardinality of the domains is bounded by
$\max(n,k)$ values.

We show in Section~\ref{sect:symmetries} how to further break symmetries.

\section{A Pseudo-Boolean Model}
\label{sect:pseudo_boolean_model}

\newcommand{\faii}[3]{\forall \; #2 \leq #1 \leq #3}
\newcommand{\faie}[3]{\forall \; #2 \leq #1 < #3}
\newcommand{\faei}[3]{\forall \; #2 < #1 \leq #3}
\newcommand{\faee}[3]{\forall \; #2 < #1 < #3}

The second model we present uses a different approach. Since the
synchronization code $c_0c_1\ldots c_{d+k-1}$ has $d+k$ characters
taken from $\{0, 1, \_\}$, we declare two binary variables for each of
these characters. The binary variable $K_i$ indicates whether the
$i^{th}$ character is a control bit. If the $i^{th}$ bit is a control
bit, the variable $V_i$ indicates whether it is a zero or a one.
\begin{align*}
  K_i & \in \{0, 1\} & \faie{i}{0}{d+k} \\
  V_i & \in \{0, 1\} & \faie{i}{0}{d+k}
\intertext{We fix the number of control bits to be $k$.}
\sum_{i=0}^{d+k-1} K_i & = k
\end{align*}

We declare a new binary variable $Y_i^g$ for $0 \leq i < d+k$ and $1 \leq
g < d+k$. When this variable is equal to one, it implies that the bits
$c_i$ and $c_{i \oplus g}$ are distinct control bits, i.e. $Y_i^g =1 \Rightarrow K_i = K_{i \oplus g} = 1 \land V_i \neq V_{i \oplus g}$. We
encode this implication with these constraints.
\begin{align*}
  Y_i^g & \in \{0, 1\} & \faie{i,j}{0}{d+k} \\
  Y_i^g & \leq K_i & \faie{i,j}{0}{d+k} \\
  Y_i^g & \leq K_{i \oplus g} & \faie{i,j}{0}{d+k} \\
  (1 - Y_i^g) + V_i + V_{i \oplus g} & \geq 1 & \faie{i,j}{0}{d+k} \\
  (1 - Y_i^g) + (1 - V_i) + (1 - V_{i \oplus g})  & \geq 1  & \faie{i,j}{0}{d+k} \\
  \intertext{Referring to Definition~\ref{def::code}, we make sure
    that there is a control bit in $C^i[0..n-1]$ that differs from a
    control bit in $C^j[0..n-1]$. This is ensured by this constraint.}
  \sum_{p \in \{i \oplus a \mid a = 0..n-1\}} Y_p^{j-i} & \geq 1 & \forall \, 0 \leq i < j < d+k\\
  \intertext{To break symmetries, we force the variable $V_i$ to be
    assigned value~$0$ whenever $c_i$ is not a control bit.}
  K_i & \geq V_i& \forall \, 0 \leq i < d+k
\end{align*}

As for the constraint model of
Section~\ref{sect:pseudo_boolean_model}, this model has a total of
$O(k^2+d^2)$ variables and $O(k^2+d^2)$ constraints. However, all
domains contain two values.

We show in the next section how to break additional symmetries.

\section{Symmetries}
\label{sect:symmetries}

For any synchronization code $C$, there exist several other codes that
are equivalent. For instance, all zeros in $C$ can be changed to ones
and all ones changed to zeros. This produces a valid code.  If $C$ is
a synchronization code, the rotation $C^i$ is also a synchronization
code for any integer $i$.  Finally, reverting the sequence $C$ such
that the first character becomes last and the last character becomes
first also produces a valid code.

In order to break symmetries in both models, we force the first two
characters of the code to be 0 and 1. Indeed, any synchronization code
must have two adjacent bits of different value in order for the
subsequences $C^0[0..n-1]$ and $C^1[0..n-1]$ not to match. These bits
could be anywhere in the sequence $C$, but after applying a rotation,
one can always make them appear at the first two positions of the
sequence.

We also force the number of control bits set to zero to be no fewer
than the number of control bits set to one.

In the constraint model of Section~\ref{sect:constraint_model}, we add
these constraints.
\begin{align*}
  P_0 & = 0 & P_1 & = 1 & V_0 & = 0 & V_1 & = 1 &
  \sum_{i=0}^{k-1}V_i & \leq \frac{k}{2} \\
  \intertext{In the pseudo-Boolean model of
    Section~\ref{sect:pseudo_boolean_model}, we add these
    constraints}
  K_0 & = 1 & K_1 & = 1 & V_0 & = 0 & V_1 & = 1 & \sum_{i=0}^{d+k-1}
  V_i & \leq \frac{k}{2}
\end{align*}

\section{Experiments}
\label{sect:experiments}

We implemented the constraint model using Gecode 3.7.3 and solved the
pseudo-Boolean model using the solver MiniSat+~\cite{minisat}. All experiments were
conducted on a 2.4 GHz Intel Core i7 machine, with 4 Go, 1333 MHz DDR3
RAM.

We tried Gecode using different predefined branching heuristics. The
most efficient method was to choose the variable with the smallest
domain size divided by the weighted degree of the variable ({\tt
  INT\_VAR\_SIZE\_AFC\_MIN})~\cite{DBLP:conf/ecai/BoussemartHLS04} --- a variable ordering heuristic that gives a higher priority to variables on which failures occur more often.
Gecode was able to solve small instances such as $d = n = 8, k=15$ in
5 minutes and 18 seconds. However, the solver could not prove that the
instance $d = n = 8, k = 14$ is unsatisfiable even after a month of
computation.

\newcommand{\figurescale}{1.45}
\begin{figure}[t]
\begin{center}
\begin{tabular}{@{}cc@{}}
\scalebox{\figurescale}{\includegraphics{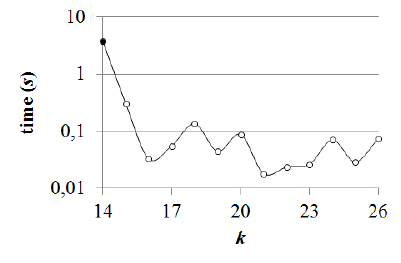}} &
\scalebox{\figurescale}{\includegraphics{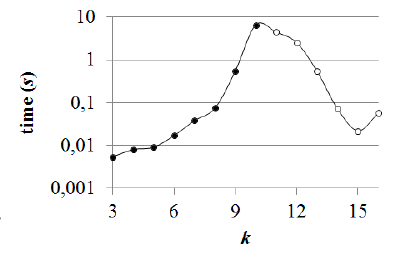}} \\
$d=n=8$ &
$d=n=16$ \\
\scalebox{\figurescale}{\includegraphics{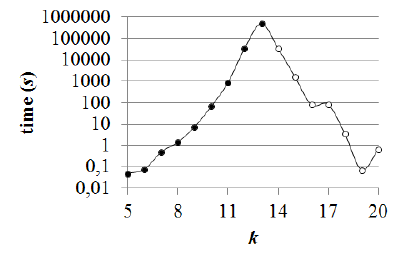}} &
\scalebox{\figurescale}{\includegraphics{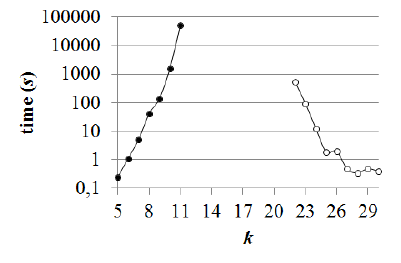}} \\
$d=n=32$ &
$d=n=64$
\end{tabular}
\caption{Time required by MiniSat+ to find a $(d,k,n)$-synchronization
  code or prove that no such a code exists.  The number of data
  bits~$d$ and the reliability~$n$ are fixed, only the number~$k$ of
  control bits vary. A black circle represents an instance for which
  there is no possible $(d,k,n)$-synchronization code, and a white
  circle represents that such code exists.
\label{fig::results}}
\end{center}
\end{figure}


We tried to solve the pseudo-Boolean model with the solver MiniSat+
using the default configuration. The solver solves the instance
$d=n=8, k=15$ in 0.31 second and proves the unsatisfiability of the
instance $d=n=8,k=14$ in 3.77 seconds. We also tried the
pseudo-Boolean solver Sat4j~\cite{sat4j} but MiniSat+ turned out to be more
competitive on most instances\footnote{For instance, we showed there is no $(32,13,32)$-synchronization code after about 6 days using MiniSat+ while Sat4j did not return any answer after 30 days.}. We conclude that the combination of the
pseudo-Boolean model and the solver MiniSat+ is more efficient than
the constraint model.


We first investigate instances with $d = n$ since those are the codes
that are the most used.  A machine that manipulates data divided into
words of $d$ bits is more likely to synchronize the stream of bits
after reading $n=d$ bits.  Generally, the fewer the synchronization bits,
the better. In some cases, a code with additional synchronization bits could
perform better for compression purposes.  

Figure~\ref{fig::results} presents the time needed for MiniSat+ to
find a synchronization code or to prove that no such code exists.  For
$(d,k,n)$-synchronization codes with $d=n=8$, $d=n=16$ and $d=n=32$
data bits, we proved that the smallest number of control bits is $k =
15$, $k = 11$ and $k = 14$ respectively. For
$(64,k,64)$-synchronization codes, we found a code with $k=22$ control
bits and proved that no such codes exists with $k \leq 11$ bits. We
previously solved most of the same instances with the solver Sat4j and
proved there is a $(64,21,64)$-synchronization code. However, we have
not yet proved there is one using MiniSat+, so the result does not
show in Figure~\ref{fig::results}. Whether there exists a
$(64,k,64)$-synchronization code for $11 < k < 21$ is an open
question. MiniSat+ did not succeed to find such codes after 10 hours
of computation.

\begin{table}[t]
\caption{Smallest values of $k$ relative to $n$ and $d$.  An entry
  with~$\infty$ indicates that the solver proved that no such code
  exists. A blank entry indicates the incapacity of the solver to
  prove or disprove the existence of the code withing 18000 seconds of
  computation.}
\label{table::dkn}
\begin{center}
\begin{tabularx}{\columnwidth}{XX|XXXXXXXXXXXX}
\hline
\hline
\multicolumn{2}{c}{} & \multicolumn{11}{c}{$n$} \\
\multicolumn{2}{c}{} & 2 & 3 & 4 & 5 & 6 & 7 & 8 & 9 & 10 & 11 & 12 & 13  \\
\cline{3-14}
\multirow{7}{*}{$d$} & 2 & {$\infty$} & $\infty$ & 6 & 3 & 3 & 3 & 3 & 3 & 3 & 3 & 3 & 3 \\
& 3 & $\infty$ & {$\infty$} & $\infty$ & 12 & 6 & 4 & 4 & 4 & 4 & 4 & 4 & 4 \\ 
& 4 & $\infty$ & $\infty$ & {$\infty$} & $\infty$ & 8 & 7 & 4 & 4 & 4 & 4 & 4 & 4 \\ 
& 5 & $\infty$ & $\infty$ & $\infty$ & {$\infty$} &  & 9 & 7 & 4 & 4 & 4 & 4 & 4 \\ 
& 6 & $\infty$ & $\infty$ & $\infty$ & $\infty$ &  & 8 & 8 & 8 & 7 & 5 & 5 & 5 \\ 
& 7 & $\infty$ & $\infty$ & $\infty$ & $\infty$ &  & {18} & 13 & 9 & 8 & 8 & 5 & 5 \\ 
& 8 & $\infty$ & $\infty$ & $\infty$ & $\infty$ &  & 20 & {15} & 10 & 10 & 8 & 8 & 5 \\ 
 \hline \hline

\end{tabularx}
\end{center}
\end{table}

As a second experiment, we try to find the codes with the fewest
number of control bits when $d \neq n$. We fix the values of $d$ and
$n$ and let MiniSat+ solve the instances with $k=1, 2, 3,\ldots$ until
the solver finds a
$(d,k,n)$-synchronization code. If a $(d,k,n)$-synchronization code is
found, the value of $k$ is written in Table~\ref{table::dkn}. If no
$(d,k,n)$-synchronization code exists for $k \leq 2^n - d$, then
Lemma~\ref{lemma::bound} ensures that no synchronization code exists
for any value of $k$ and therefore $\infty$ is written in
Table~\ref{table::dkn}. If the solver does not find a code nor prove that it
does not exists after 18000
seconds, the entry is left blank.

%
%
%
%
%
%
%



Table~\ref{table::dkn} shows that our model is efficient with
different instances of $d$, $k$, and $n$.  It also validates
previously published results obtained using a different methodology
\cite{dube11} and extends it to a wider range of values of~$d$
and~$n$.

\begin{table}[t]
\caption{Synchronization codes used to boost CSE in previous
  research~\cite{dube11}.}
\label{table:DCC11codes}
\begin{lrbox}{\ddboxI}
  \begin{tabular}[t]{cc|l}
      $n$
    &
      $k$
    &
      \multicolumn{1}{c}{\textbf{Synchronization Scheme}}
  \\ \hline
      ---
    &
      $0$
    &
      \oA \oA \oA \oA \oA \oA \oA \oA
  \\
      ---
    &
      $1$
    &
      \oA \oA \oA \oA \oA \oA \oA \oA \aZ
  \\
      ---
    &
      $2$
    &
      \oA \oA \oA \oA \oA \oA \oA \oA \aZ \aU
  \\
      ---
    &
      $3$
    &
      \oA \oA \oA \oA \oA \oA \oA \oA \aZ \aU \aU
  \\
      ---
    &
      $4$
    &
      \oA \oA \oA \oA \oA \oA \oA \oA \aZ \aU \aU \aU
  \\
      $13$
    &
      $5$
    &
      \oA \oA \oA \oA \oA \oA \aZ \oA \oA \aZ \aU \aU \aU
  \\
      $12$
    &
      $8$
    &
      \oA \oA \oA \aZ \oA \oA \aU \aZ \aZ \oA \oA \aU \oA \aU \aU \aZ
  \\
      $11$
    &
      $8$
    &
      \oA \oA \oA \aZ \oA \aZ \oA \oA \aU \aU \aZ \oA \oA \aU \aU \aZ
  \\
      $10$
    &
      $10$
    &
      \oA \oA \oA \oA \aZ \oA \aZ \aU \aU \oA \oA \oA \aZ \aU \aZ \aZ \aU \aU
  \\
      $9$
    &
      $10$
    &
      \oA \oA \oA \oA \aZ \aZ \aZ \aU \aU \oA \oA \oA \oA \aZ \aU \aZ \aU \aU
  \\
      $8$
    &
      $15$
    &
      \oA \oA \oA \aZ \aZ \aZ \aU \aZ \oA \oA \oA \aU
      \aU \aU \aZ \aU \oA \oA \aU \aU \aZ \aZ \aU
  \\
      $7$
    &
      $20$
    &
      \aU \aU \aZ \aU \oA \aU \oA \aU \aU \aZ \aZ \oA \aZ \oA
      \aZ \aU \aZ \aZ \oA \aZ \oA \aU \aU \aZ \aZ \oA \aU \oA
  \end{tabular}
\end{lrbox}
\begin{center}
    \usebox{\ddboxI}%
\end{center}
\end{table}

\begin{table}[t]
\caption{Compression performance obtained while boosting CSE using the
  synchronization of Table~\ref{table:DCC11codes}, as presented
  in~\cite{dube11}.  Measurements are in bits per character.}
\label{table:DCC11bpcs}
\begin{lrbox}{\ddboxI}
\begin{tabular}{|@{\ }l@{\ }||@{\ }c@{\ }|@{\ }c@{\ }%
                |@{\ }c@{\ }|@{\ }c@{\ }|@{\ }c@{\ }%
                |@{\ }c@{\ }|@{\ }c@{\ }|@{\ }c@{\ }%
                |@{\ }c@{\ }|@{\ }c@{\ }|@{\ }c@{\ }%
                |@{\ }c@{\ }|@{\ }c@{\ }|@{\ }c@{\ }%
                |@{\ }c@{\ }|}
\ddTR{\resizebox{0.8\width}{\height}{Bits$/$Car.}}
             {BWT} {PPM} {Anti}{ k=0}{ k=1}{ k=2}{ k=3}
     { k=4}{n=13}{n=12}{n=11}{n=10}{ n=9}{ n=8}{ n=7}
\ddRR{bib   }{2.07}{1.91}{2.56}{1.98}{1.95}{1.92}{1.92}
     {1.91}{1.90}{1.89}{1.89}{1.89}{1.89}{1.88}{1.88}
\ddRR{book1 }{2.49}{2.40}{3.08}{2.27}{2.26}{2.25}{2.25}
     {2.25}{2.25}{2.25}{2.25}{2.25}{2.25}{2.29}{2.33}
\ddRR{book2 }{2.13}{2.02}{2.81}{1.98}{1.96}{1.95}{1.95}
     {1.94}{1.94}{1.93}{1.93}{1.93}{1.93}{1.93}{1.95}
\ddRR{geo   }{4.45}{4.83}{6.22}{5.35}{5.21}{4.98}{4.81}
     {4.70}{4.63}{4.58}{4.59}{4.58}{4.58}{4.58}{4.57}
\ddRR{news  }{2.59}{2.42}{3.42}{2.52}{2.49}{2.46}{2.45}
     {2.44}{2.43}{2.43}{2.43}{2.43}{2.43}{2.42}{2.42}
\ddRR{obj1  }{3.98}{4.00}{4.87}{4.46}{4.53}{4.43}{4.32}
     {4.24}{4.17}{4.03}{4.05}{4.02}{4.01}{4.00}{3.99}
\ddRR{obj2  }{2.64}{2.43}{3.61}{2.71}{2.69}{2.59}{2.53}
     {2.49}{2.47}{2.45}{2.46}{2.45}{2.45}{2.45}{2.44}
\ddRR{paper1}{2.55}{2.37}{3.17}{2.54}{2.51}{2.48}{2.47}
     {2.46}{2.44}{2.41}{2.41}{2.41}{2.41}{2.41}{2.41}
\ddRR{paper2}{2.51}{2.36}{3.14}{2.41}{2.39}{2.38}{2.38}
     {2.37}{2.36}{2.35}{2.35}{2.34}{2.35}{2.34}{2.34}
\ddRR{paper3}{---} {---} {---} {2.73}{2.70}{2.69}{2.68}
     {2.67}{2.65}{2.63}{2.63}{2.63}{2.63}{2.63}{2.63}
\ddRR{paper4}{---} {---} {---} {3.20}{3.16}{3.13}{3.13}
     {3.10}{3.07}{3.02}{3.02}{3.02}{3.02}{3.01}{3.01}
\ddRR{paper5}{---} {---} {---} {3.33}{3.29}{3.27}{3.24}
     {3.22}{3.19}{3.12}{3.13}{3.12}{3.12}{3.11}{3.10}
\ddRR{paper6}{---} {---} {---} {2.65}{2.61}{2.58}{2.56}
     {2.55}{2.52}{2.50}{2.50}{2.49}{2.50}{2.49}{2.49}
\ddRR{pic   }{0.83}{0.85}{1.09}{0.77}{0.84}{0.82}{0.82}
     {0.82}{0.81}{0.81}{0.81}{0.81}{0.81}{0.81}{0.81}
\ddRR{progc }{2.58}{2.40}{3.18}{2.60}{2.58}{2.54}{2.52}
     {2.50}{2.48}{2.44}{2.44}{2.44}{2.44}{2.44}{2.43}
\ddRR{progl }{1.80}{1.67}{2.24}{1.71}{1.70}{1.69}{1.68}
     {1.67}{1.66}{1.65}{1.65}{1.65}{1.65}{1.64}{1.64}
\ddRR{progp }{1.79}{1.62}{2.27}{1.78}{1.76}{1.73}{1.71}
     {1.70}{1.68}{1.66}{1.66}{1.66}{1.66}{1.66}{1.65}
\ddRR{trans }{1.57}{1.45}{1.94}{1.60}{1.58}{1.53}{1.52}
     {1.50}{1.48}{1.47}{1.47}{1.47}{1.47}{1.47}{1.46}
\end{tabular}
\end{lrbox}
\begin{center}
  \resizebox{\textwidth}{!}{%
    \usebox{\ddboxI}%
  }
\end{center}
\end{table}

Even though this paper addresses the problem of finding synchronization
codes, it remains interesting to have an illustration of the
effectiveness of the synchronization codes on CSE's performance.
Tables~\ref{table:DCC11codes} and~\ref{table:DCC11bpcs} show
synchronization codes and compression performance measurements,
respectively, presented in previous research by
Dub\'e~\cite{dube11}.\footnote{That previous research has been carried
  on prior to that presented in this very paper.  The synchronization
  codes that were used have been obtained using various means, some
  manual, other computational.}  In Table~\ref{table:DCC11codes}, the $n$ and $k$ parameters are indicated for each code. Certain codes have no associated $n$; these are unreliable (soft) codes.  Apart from the first three columns, the measurements presented in Table~\ref{table:DCC11bpcs} correspond
to the use of the codes of Table~\ref{table:DCC11codes}.  The first three columns present
the compression performance of three well-known compression techniques: the
Burrows-Wheeler transform, prediction by partial matching, and compression using
antidictionaries.  The files in the
benchmark come from the Calgary Corpus~\cite{witten87b}. We refer the reader to the original paper for a complete description of these experiments~\cite{dube11}. One might notice that most of the boosting effect is obtained as soon as the synchronization code is reliable.  Still, there exist many
applications other than CSE that may benefit from very strong codes.

\section{Conclusion}

We presented the combinatorial problem of finding a
$(d,k,n)$-synchronization code for CSE compression algorithms.  We
presented two models for solving the problem: a constraint programming
model and a pseudo-Boolean model. The pseudo-Boolean model, when
solved with MiniSat+, turned out to be more efficient. We were able to
produce $(64,k,64)$-synchronization codes for large instances
requiring as few as 21~control bits.  We proved that for
$(7,k,7)$-synchroni\-zation codes to exist, we need $k \geq 18$,
for $(8,k,8)$-codes, we need $k \geq 15$, for
$(16,k,16)$-codes, we need $k \geq 11$,
and for $(32,k,32)$-synchronization codes, we need $k \geq 14$.
These bounds are tight.
We also computed the minimum number of control bits required for
instances with $2 \leq d \leq 8$ and $2 \leq n \leq 13$.

\subsubsection*{Acknowledgments}
The authors would like to thank Julia Gustavsen and Jiye Li for their
support while writing this paper. This work is funded by the Natural
Sciences and Engineering Research Council of Canada.

\bibliographystyle{splncs}
\bibliography{references}

\end{document}